\algrenewcommand{\algorithmiccomment}[1]{\hskip3em//#1}
\newcommand{\jiaqing}[1]{}
\newcommand{\Rmnum}[1]{\expandafter\@slowromancap\romannumeral #1@}
\begin{document}

\title{Quantum-to-quantum Bernoulli factory problem}

\author{Jiaqing Jiang$^{1,2}$}
\email{jiangjiaqing17@mails.ucas.ac.cn}
\author{Jialin Zhang$^{1,2}$}
\email{zhangjialin@ict.ac.cn}
\author{Xiaoming Sun$^{1,2}$}
\email{sunxiaoming@ict.ac.cn}
\affiliation{
 $^{1}$1 CAS Key Lab of Network Data Science and Technology, Institute of Computing Technology, Chinese Academy of Sciences, 100190, Beijing, China.\\
 $^{2}$2 University of Chinese Academy of Sciences, Beijing, 100049, China.
}

\begin{abstract}

Given a coin with unknown bias $p\in [0,1]$, can we exactly simulate another coin with bias $f(p)$? The exact set of simulable functions has been well characterized 20 years ago. In this paper, we ask the quantum counterpart of this question: Given the quantum coin $|p\rangle=\sqrt{p}|0\rangle+\sqrt{1-p}|1\rangle$, can we exactly simulate another quantum coin $|f(p)\rangle=\sqrt{f(p)}|0\rangle+\sqrt{1-f(p)}|1\rangle$? We give the full characterization of simulable quantum state $k_0(p)|0\rangle+k_1(p)|1\rangle$ from quantum coin $|p\rangle=\sqrt{p}|0\rangle+\sqrt{1-p}|1\rangle$, and present an algorithm to transform it. 
Surprisingly, we show that simulable sets in the quantum-to-quantum case and classical-to-classical case have no inclusion relationship with each other.

\end{abstract}

\pacs{Valid PACS appear here}


\maketitle

\theoremstyle{remark}
\newtheorem{definition}{\indent Definition}
\newtheorem*{observation}{\indent Observation}
\newtheorem*{sufficiency}{\indent Sufficiency}
\newtheorem*{necessity}{\indent Necessity}
\newtheorem{lemma}{\indent Lemma}
\newtheorem{theorem}{\indent Theorem}
\newtheorem*{corollary}{\indent Corollary}
\newtheorem*{conjecture}{\indent Conjecture}
\newtheorem{statement}{\indent Statement}

\theoremstyle{definition}

\def\QEDclosed{\mbox{\rule[0pt]{1.3ex}{1.3ex}}}
\def\QED{\QEDclosed}
\def\proof{\indent{\em Proof}.}
\def\endproof{\hspace*{\fill}~\QED\par\endtrivlist\unskip}

\section{\label{sec:level1}Introduction}
The classical Bernoulli factory, proposed by Keane and O'Brien~\cite{classical} in 1992, considered the following problem: Given a biased classical coin, with unknown success probability (denoted by $p$), can we use it to simulate an arbitrary function $f(p)$? That means, can we construct an event with success probability which exactly equals $f(p)$? Note that $p$ is unknown and the event we choose should not depend on the value of $p$.
For example, $f(p)=p^2$ is simulable. We can flip the coin twice and the event with two heads has success probability $p^2$. Surprisingly, Keane and O'Brien~\cite{classical} proved that the probability amplification function $\min(2p,2-2p)$ for $p\in[0,1]$ is not simulable. In fact, they had determined the exact set of simulable functions. Roughly speaking, a function $f(p): [0,1]\rightarrow [0,1]$ is simulable if and only if it is continuous and the function value does not touch zero or one in $(0,1)$.

In 2015, Dale, Jennings and Rudolph~\cite{quantum} generalized the Bernoulli factory problem to the case of quantum input and classical output. The problem is, if we use quantum coins $|p\rangle=\sqrt{p}|0\rangle+\sqrt{1-p}|1\rangle$ instead of classical ones, what kind of function $f(p)$ is simulable? In other words, can we find an event with success probability $f(p)$? Obviously, if $f(p)$ is simulable in the classical-to-classical case, it is also simulable in the quantum-to-classical case. We can simply measure the coin $|p\rangle$ in computational bases and thus regard $|p\rangle$ as a classical coin. Furthermore, we can use unitary transformations in the simulation process and obtain some events which cannot be simulated in the classical case. For example, if we are given a coin $|p\rangle$, we can do a unitary transformation $U_a=\sqrt{a}|0\rangle\langle0|+\sqrt{1-a}|0\rangle\langle1|+\sqrt{1-a}|1\rangle\langle0|-\sqrt{a}|1\rangle\langle1|$ on $|p\rangle$ and then measure it. The probability of getting result $|1\rangle$ will be $f_a(p)=(\sqrt{p(1-a)}-\sqrt{a(1-p)})^2$ while $f_a(p)$ cannot be simulated in the classical case when $a\in(0,1)$. Dale \emph{et al.} characterized the exact set of simulable functions in their quantum-to-classical case, which is strictly larger than the classical-to-classical case. Specially, $\min(2p,2-2p)$ is simulable in the quantum-to-classical case. In 2016, Yuan~\emph{et al.} implemented experiments to show this quantum advantage~\cite{experiment}.

So far, the Bernoulli factory in classical-to-classical and quantum-to-classical cases has been solved. It is natural to ask how about the quantum-to-quantum case. As Dale \emph{et al.}~\cite{quantum} questioned: Given an unbounded number of quantum coins  $|p\rangle$, what kind of quantum coins $|f(p)\rangle=\sqrt{f(p)}|0\rangle+\sqrt{1-f(p)}|1\rangle$ can we get through unitary transformation and measurement?

In this paper, we further generalize the problem by allowing complex amplitudes. With an unbounded number of quantum coins $|p\rangle$, what kind of quantum states $k_0(p)|0\rangle+ k_1(p)|1\rangle$ where $|k_0(p)|^2+|k_1(p)|^2=1$ can we get through unitary transformation and measurement?
We answer the question by providing the exact characterization of the simulable states.
The characterization of $|f(p)\rangle$, in other words, $(k_0(p),k_1(p))= (\sqrt{f(p)},\sqrt{1-f(p)})$, will be given as a corollary to the main theorem.

The main difference between our work and the previous work is the quantum output. In both classical-to-classical and quantum-to-classical cases, they try to find an event with success probability $f(p)$ exactly. But, we aim to transform a quantum state $|p\rangle$ to a new state $|f(p)\rangle$, or more generally $k_0(p)|0\rangle+k_1(p)|1\rangle$, through unitary transformations and measurement, where $p$ is unknown. By simply measuring the constructed state in the computational basis, we can get classical output $f(p)=|k_0(p)|^2$ from quantum output. Thus, intuitively, the set of the probability $|k_0(p)|^2$, related to simulable tuple $(k_0(p),k_1(p))$,  in the quantum-to-quantum case is the subset of the simulable functions in the quantum-to-classical case. However, since our construction process contains quantum measurement, we will transform $|p\rangle$ to a new state only with a nonzero success probability. This means, the event ``measuring the constructed state in computational basis'' might not be a legal probability event in the quantum-to-classical case. Therefore, this intuition does not directly lead to such a result. In our paper, we use our characterization of simulable tuple in the quantum-to-quantum case to show that the classical output related to the simulable tuple is indeed the subset of that in the quantum-to-classical case.

However, although our result is a strict subset of Dale \emph{et al.}'s, there are obvious some advantages for our quantum output. The quantum state $|f(p)\rangle$ itself is much more useful in a large quantum system than an event with success probability $f(p)$. For example, consider two states $\frac{1}{\sqrt{p^2+1}}(p|0\rangle+|1\rangle)$ and $\frac{1}{\sqrt{q^2+1}}(q|0\rangle+|1\rangle)$, with $p$ and $q$ unknown. We may wonder if we can construct state $\frac{1}{\sqrt{(f(p)-q)^2+1}}( (f(p)-q)|0\rangle+|1\rangle)$. Using technique in this paper, we know the answer is yes when $|f(p)\rangle$ is constructible and the final state may help us judge whether $f(p)=q$ or not. Our result deepens our understanding of the power of quantum state manipulation.

Besides, our work also helps us find more efficient simulating methods in the quantum-to-classical case. In \cite{quantum}, Dale \emph{et al.} implemented a special coin and used it to simulate $f(p)$. It is natural to consider if there exists another coin which leads to better performance in the average case. Our results provide the basic set for the searching. On the other hand, when $(k_0(p),k_1(p))$ is simulable, our algorithm can simulate $f(p)=|k_0(p)|^2$ directly while the method in \cite{quantum} needs to do quantum operations as well as design experiments for classical simulation.

For those reasons, we are curious about the quantum-to-quantum case.
In this paper, we give a succinct condition to describe the exact set of simulable tuple $(k_0(p), k_1(p))$, where $|k_0(p)|^2+|k_1(p)|^2=1$. The main observation is that the ratio $\frac{k_0(p)}{k_1(p)}$ has an intrinsic field structure. The characterization of $(\sqrt{f(p)},\sqrt{1-f(p)})$, where $f(p)$ is a real function of $p$, is a corollary of the main result. Surprisingly, we find that $(p, \sqrt{1-p^2})$ is not simulable while $(e^{i\theta(p)}p, \sqrt{1-p^2})$ is simulable for some $\theta(p)$.

One key point in our paper is that $p$ is unknown, and the quantum operations we take should not depend on $p$. If we know $p$, we can calculate $(k_0(p),k_1(p))$ and use unitary transformations~\cite{Nilson} to transform $|0\rangle$ to $k_0(p)|0\rangle+k_1(p)|1\rangle$. When $p$ is unknown, we may first estimate the value of $p$~\cite{est1,est2,est3} and approximately construct the target state with arbitrary accuracy. Different from approximation, in this paper, our aim is to exactly construct $k_0(p)|0\rangle+k_1(p)|1\rangle$. Besides approximate transformation, there is a similar work called quantum sampling: given classical distribution $(\pi_i)_i$, the aim is to construct state $\sum\sqrt{\pi_i}|i\rangle$ efficiently \cite{q-sam1,q-sam2,q-sam3,q-sam4}. In quantum sampling, $(\pi_i)_i$ is known and the desired result is a muti-qubit state, while our work considers one qubit with unknown amplitude $(k_0(p), k_1(p))$.

The structure of this paper is as follows. In Sec. $\uppercase\expandafter{\romannumeral2}$, we give some formal definitions. In Sec. $\uppercase\expandafter{\romannumeral3}$, we provide our main theorem and the proof, as well as an algorithm to implement a given simulable state\jiaqing{or $|f(p)\rangle$ or tuple?}. After that, we show some examples of simulable states. We then compare the simulable sets in different cases. Finally we give a summary in Sec. $\uppercase\expandafter{\romannumeral4}$.

\section{\label{sec:level1}Preliminaries}

We give the formal definition of simulable tuple in the quantum input and quantum output case.


Intuitively, we allow the amplitude to be complex functions. That means, instead of constructing $\sqrt{f(p)}|0\rangle+\sqrt{1-f(p)}|1\rangle$, we try to construct $k_0(p)|0\rangle+k_1(p)|1\rangle$ where $k_i(p)$ might be a complex function.

\begin{definition}
Let $h(p)$ be a complex function of $p$. We define a quantum state $|f_h\rangle$ to be
$$|f_h\rangle=\frac{1}{\sqrt{1+|h(p)|^2}}(h(p)|0\rangle+|1\rangle).$$
\end{definition}
\begin{definition}
Let $k_0(p)$, $k_1(p)$ be two complex functions satisfying $|k_0(p)|^2+|k_1(p)|^2=1$. A tuple $(k_0(p),k_1(p))$ is said to be simulable if we can transform $|p\rangle$ to $|f_{\frac{k_0(p)}{k_1(p)}}\rangle=k_0(p)|0\rangle+k_1(p)|1\rangle$ by finite steps with a nonzero success probability, ignoring a global phase. In each step, we are allowed to use unitary transformation or measurement to the current state combined with some auxiliary qubits.

The auxiliary qubits can be other simulable states or some constant states $|a_c\rangle=\frac{1}{\sqrt{|a|^2+1}}(a|0\rangle+|1\rangle)$, where $a$ is a constant complex number which does not depend on $p$ or $f(\cdot)$.
\end{definition}

For example, $(\sqrt{f(p)},\sqrt{1-f(p)})$ is simulable when $f(p)=\frac{(1-2p)^2}{1+(1-2p)^2}$. We can implement $\sqrt{f(p)}|0\rangle + \sqrt{1-f(p)}|1\rangle$ by the following steps:
\begin{itemize}
\item[1] Use CNOT gate to $|\psi_0\rangle=|p\rangle|p\rangle$,we will get $|\psi_1\rangle=p|00\rangle+\sqrt{p(1-p)}|01\rangle+(1-p)|10\rangle+\sqrt{p(1-p)}|11\rangle$.
\item[2] Measure the second qubit of $|\psi_1\rangle$ in the computational basis. If we obtain the measurement outcome $|0\rangle$, the first qubit will collapse to $|\phi_0\rangle=\frac{1}{\sqrt{p^2+(1-p)^2}}(p|0\rangle+(1-p)|1\rangle$), and then to step 3. Otherwise, return to step 1.
\item[3] Use Hadamard gate to $|\phi_0\rangle$, and then use Pauli-X. We get the target state $|f(p)\rangle=\sqrt{\frac{(1-2p)^2}{1+(1-2p)^2}}|0\rangle+\sqrt{\frac{1}{1+(1-2p)^2}}|1\rangle$.
\end{itemize}


Given simulable $(k_0(p),k_1(p))$, it's not always obvious to know how to implement $|f_{\frac{k_0(p)}{k_1(p)}}\rangle$. However, in the following part, we will give an algorithm to implement them systematically.

\section{\label{sec:level1}Results OF Quantum-to-Quantum case and some discussion}

First, we give a sufficient and necessary condition to characterize the simulable tuple in the quantum-to-quantum case. The characterization of the real simulable tuple $(\sqrt{f(p)},\sqrt{1-f(p)})$ will be given later as a corollary. We use $\mathbb{C}$ to represent the complex number field, and $\mathbb{R}$ the real number field. The main theorem states the following.

\begin{theorem}\label{thm:field}
Let $k_0(p),k_1(p)$ be two complex functions satisfying $|k_0(p)|^2+|k_1(p)|^2=1$. Tuple $(k_0(p),k_1(p))$ is simulable if and only if $\frac{k_0(p)}{k_1(p)}$ belongs to the field generated by $\sqrt{\frac{p}{1-p}}$ and complex field $\mathbb{C}$. More precisely,
$$\frac{k_0(p)}{k_1(p)}=\frac{g_1(p)}{g_2(p)}\sqrt{\frac{p}{1-p}}+\frac{g_3(p)}{g_4(p)},$$
where $g_i(p)$ are polynomials of $p$ with coefficients in $\mathbb{C}$.
\end{theorem}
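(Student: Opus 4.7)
The plan is to prove the equivalence by working inside the field $K := \mathbb{C}(t)$ with $t := \sqrt{p/(1-p)}$. Because $t^{2} = p/(1-p) \in \mathbb{C}(p)$, $K$ is a degree-$2$ extension of $\mathbb{C}(p)$, so every element of $K$ automatically admits the normal form $\tfrac{g_1(p)}{g_2(p)}\,t + \tfrac{g_3(p)}{g_4(p)}$ (rationalise any denominator using $t^{2} = p/(1-p)$). It therefore suffices to prove that $(k_0(p), k_1(p))$ is simulable iff $k_0(p)/k_1(p) \in K$.

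For the necessity direction, I would induct on the total number of elementary operations — unitaries, measurements, and seed-state preparations — across the entire (recursive) construction tree, maintaining the invariant that, for every intermediate multi-qubit state, the pairwise ratio $\alpha/\beta$ of any two nonzero amplitudes lies in $K$. The seeds satisfy it: $|p\rangle$ gives ratio $t$, each $|a_c\rangle$ gives ratios in $\mathbb{C}$, and any auxiliary simulable state satisfies it by inner induction. Tensoring preserves the invariant because amplitude ratios multiply; a unitary $U$ with complex entries preserves it because $B_k/B_l = \bigl(\sum_i U_{ki} A_i\bigr)\big/\bigl(\sum_i U_{li} A_i\bigr) = \bigl(\sum_i U_{ki}(A_i/A_1)\bigr)\big/\bigl(\sum_i U_{li}(A_i/A_1)\bigr)$, a ratio of $\mathbb{C}$-linear combinations of elements of $K$, hence in $K$; and conditioning on a measurement outcome simply restricts the amplitude vector. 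Since the output is a single qubit, this forces $k_0/k_1 \in K$.

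For the sufficiency direction, I would show that the set $S := \{h : |f_h\rangle \text{ is simulable}\}$ is a subfield of $K$ containing $t$ and $\mathbb{C}$, and therefore equals $K$. The seeds are immediate: $|p\rangle$ gives $h = t$, each $|a_c\rangle$ gives $h = a \in \mathbb{C}$, and $\sigma_x$ applied to $|f_h\rangle$ realises inversion $h \mapsto 1/h$. For multiplication I would tensor $|f_h\rangle \otimes |f_{h'}\rangle$, whose unnormalised amplitudes on $(|00\rangle, |01\rangle, |10\rangle, |11\rangle)$ are $(hh', h, h', 1)$, apply a $\mathrm{CNOT}$ with the first qubit as control, and measure the second qubit; conditioning on $|0\rangle$ leaves the first qubit in $hh'|0\rangle + |1\rangle$, i.e.\ $|f_{hh'}\rangle$. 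For addition I would design a unitary whose first row extracts $(0, 1, 1, 0)/\sqrt{2}$ and whose third row extracts $(0, 0, 0, 1)$ (the remaining rows being any orthonormal completion); applied to $|f_h\rangle \otimes |f_{h'}\rangle$ and post-selected on the corresponding outcome, this yields $(h+h')/\sqrt{2}\,|0\rangle + |1\rangle$, which I would then rescale by $\sqrt{2}$ using the multiplication gadget with the constant state $|a_c\rangle$, $a = \sqrt{2}$, to obtain $|f_{h+h'}\rangle$.

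The step I expect to be the main obstacle is the careful bookkeeping around the addition gadget: producing a bona fide unitary requires supplying a valid orthonormal completion of the two prescribed rows, and I must verify that the ``good'' measurement outcome has strictly positive success probability, which reduces to a non-vanishing statement for specific polynomials in $\sqrt{p}$ and $\sqrt{1-p}$ (the selected amplitudes must not vanish identically as functions of $p$). A related subtlety is potential circularity in the recursive ``auxiliary simulable state'' clause of the definition; my safeguard is to induct on the size of the entire construction tree rather than the depth of a single protocol, so that every recursive invocation strictly decreases the induction parameter.
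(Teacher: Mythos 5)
Your proposal follows essentially the same two-part strategy as the paper: for sufficiency you establish that the set $S$ of simulable ratios is a field containing $\mathbb{C}$ and $\sqrt{p/(1-p)}$, using $X$ for inversion, CNOT plus post-selection for multiplication, and a hand-designed two-qubit unitary plus post-selection and rescaling for addition (the paper's gate $B$ is exactly one valid orthonormal completion of your two prescribed rows); for necessity you run the same amplitude-ratio invariant by induction over the construction, observing that tensoring, unitaries (row-wise $\mathbb{C}$-linear combinations), and measurement conditioning all preserve membership of pairwise ratios in $\mathbb{C}(\sqrt{p/(1-p)})$. Your extra care about inducting on the size of the whole construction tree (to avoid circularity in the ``auxiliary simulable state'' clause) and about positive success probability are sensible refinements that the paper leaves implicit, but the route is the same.
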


As we formally defined in Definition 1, tuple $(k_0(p),k_1(p))$  is simulable meaning that we can transform $|p\rangle$ to $|f_{\frac{k_0(p)}{k_1(p)}}\rangle$. To prove the theorem, the key point is to find a proper way to describe the change of amplitudes under quantum operations. For example, suppose we use a single qubit gate $U$ to $|p\rangle$ and obtain $U|p\rangle=s_0(p)|0\rangle+s_1(p)|1\rangle$. First, we know that $s_0(p)$ and $s_1(p)$ must be the linear combinations of $\sqrt{p}$ and $\sqrt{1-p}$. But be cautious, since $U$ is unitary, there are some constraints to $s_0(p)$ and $s_1(p)$. Even worse, when we are allowed to use measurement and auxiliary qubits, the amplitudes might become complicated and hard to describe.

So instead of considering amplitudes respectively, we consider the ratio $\frac{k_0(p)}{k_1(p)}$. Fortunately, we find that the set of $\frac{k_0(p)}{k_1(p)}$, where $(k_0(p),k_1(p))$ is simulable, has an intrinsic field structure: It is exactly the field generated by $\sqrt{\frac{p}{1-p}}$ and complex field $\mathbb{C}$.

 Let $S$ be the set of ratios of amplitudes in simulable tuple and $M$ be the set we mentioned in the theorem, that is,
  $$
  S=\{\frac{k_0(p)}{k_1(p)} \quad|\quad (k_0(p),k_1(p))\text{ is simulable} \}
  $$

$$
M=\{\frac{g_1(p)}{g_2(p)}\sqrt{\frac{p}{1-p}}+\frac{g_3(p)}{g_4(p)}\}
$$
where $g_i(p)$ is the polynomial of $p$ with coefficients in $\mathbb{C}$.

Our aim is to prove $S=M$.

\begin{proof}
\begin{sufficiency}

In this part, we want to prove $M \subseteq S$, which means if a $\frac{k_0(p)}{k_1(p)} \in M$, then $(k_0(p),k_1(p))$ is simulable. Here we give an intuitive proof to discover the structure of $S$, which implies $S$ is a field and has element $\sqrt{\frac{p}{1-p}}$. According to field theory, $M$ is exactly the field generated by $\mathbb{C}$ and $\sqrt{\frac{p}{1-p}}$. Thus we conclude $M\subseteq S$. Later, we provide an algorithm to implement $|f_{\frac{k_0(p)}{k_1(p)}}\rangle$.

Although we are allowed to use various kinds of unitary transformations, we show that a small set of unitary transformation is enough: We only need the Pauli-$X$ gate, the Hadamard gate, the controlled-NOT gate and a specific two-qubit gate (denote $B$ here).
\begin{equation*}       
X=\left(                 
  \begin{array}{cc}   
    0 &1\\  
   1 & 0 \\
  \end{array}
\right),
H=\frac{1}{\sqrt{2}}\left(                 
  \begin{array}{cc}   
    1&1\\  
   1 & -1 \\
  \end{array}
\right)   ,
\end{equation*}
\begin{equation*}       
CNOT=\left(                 
  \begin{array}{cccc}   
    1 & 0& 0&0\\  
   0&1 & 0&0\\
    0 & 0 &0&1\\
    0 & 0 &1&0\\
  \end{array}
\right),
B=\left(                 
  \begin{array}{cccc}   
    0& 0& 0&1\\  
   0&\frac{1}{\sqrt{2}} & \frac{1}{\sqrt{2}}&0\\
    0 & \frac{1}{\sqrt{2}} &\frac{-1}{\sqrt{2}}&0\\
    1 & 0 &0&0\\
  \end{array}
\right)   .             
\end{equation*}

Firstly, we prove $S$ is a field by showing $S$ is closed under addition and multiplication, and every element in $S$ has an multiplicative inverse. 

Suppose $h_1(p),h_2(p) \in S$, so we can implement $|\phi_i\rangle=c_i(p)(h_i(p)|0\rangle+|1\rangle) ( i=1,2)$, where $c_i(p)$ is used to normalize the vector.
\begin{itemize}
\item Inversion: Apply $X$ to $|\phi_1\rangle$, we get $c_1(p)(|0\rangle+h_1(p)|1\rangle)$. So multiplicative inverse $\frac{1}{h_1(p)} \in S$.
\item Multiplication: Apply $CNOT$ to $|\phi_1\rangle| \phi_2\rangle$, we get $CNOT|\phi_1\rangle| \phi_2\rangle=c_1(p)c_2(p)\{h_1(p)h_2(p)|00\rangle+h_1(p)|01\rangle+|10\rangle+h_2(p)|11\rangle\}$.
   Measure the second qubit in the computational basis. If we get the measurement outcome $|0\rangle$,
  the first qubit will collapse to $|\phi_3\rangle=c_3(p)(h_1(p)h_2(p)|0\rangle+|1\rangle)$, so $ h_1(p)h_2(p)\in S$. Specially, when $h_2(p)=c \in \mathbb{C}$,we know $ch_1(p) \in S$.
\item Addition: Apply $B$ to $|\phi_1\rangle| \phi_2\rangle$, we get $B|\phi_1\rangle| \phi_2\rangle=c_1(p)c_2(p)(|00\rangle+\frac{h_1(p)+h_2(p)}{\sqrt{2}}|01\rangle+\frac{h_1(p)-h_2(p)}{\sqrt{2}}|10\rangle+h_1(p)h_2(p)|11\rangle)$.
    Measure the first qubit in the computational basis. If we get measurement outcome $|0\rangle$,
    the second qubit will collapse to $|\phi_4\rangle=c_4(p)(|0\rangle+\frac{h_1(p)+h_2(p)}{\sqrt{2}}|1\rangle)$. Using the results in the previous two cases, we know $ h_1(p)+h_2(p) \in S$.
    \end{itemize}
So $S$ is a field. Initially, we have access to $|p\rangle$ and arbitrary constant qubits, so the generator of $S$ contains $\mathbb{C}$ and $\sqrt{\frac{p}{1-p}}$.

Secondly, use theory in field extension, we know $M$ is exactly the field generated by $\mathbb{C}$ and $\sqrt{\frac{p}{1-p}}$ . So we conclude $M\subseteq S$.
\end{sufficiency}

\begin{necessity}

In this part, we want to prove $S\subseteq M$: if $(k_0(p),k_1(p))$ is simulable, $\frac{k_0(p)}{k_1(p)}$ must belong to $M$. Let $n$ be the number of times we use the measurement operation in the simulation process. We will prove the necessity by using induction on $n$. Suppose we can implement $|\varphi\rangle=\sum_{j} s_{j}(p)|j\rangle$, note that $|\varphi\rangle$ may be a multi-qubit state. We state that

\begin{statement}
For any implementable $|\varphi\rangle =\sum_{j} s_{j}(p)|j\rangle$, the ratio of arbitrary two amplitudes of $|\varphi\rangle$ belongs to $M$, that is, $\forall k,l$, we have $\frac{s_{k}(p)}{s_{l}(p)} \in M$.
\end{statement}

The statement will lead to the proof of necessity.
If we reach a target qubit $|f_{\frac{k_0(p)}{k_1(p)}}\rangle$ by measuring other qubits of an implementable $|\varphi\rangle$,
The ratio of amplitudes of $|f_{\frac{k_0(p)}{k_1(p)}}\rangle$ will be equal to the ratio of two corresponding amplitudes of $|\varphi\rangle$. According to statement 1,we have $\frac{k_0(p)}{k_1(p)}\in M$.

Now, let us prove the statement. First, we show the statement keeps true under unitary transformation. In the simulation process after the $k$-th measurement operation, suppose we have already implemented several quantum states $|\varphi_i\rangle, i=1,...,s$. Without loss of generality, we assume $s=2$. When $s>2$, the analysis is similar.
    We write $|\varphi_i\rangle=\sum_{j} s_{ij}(p)|j\rangle (i=1,2)$. Then $|\varphi_1\rangle|\varphi_2\rangle=\sum_{k,r} s_{1k}(p)s_{2r}(p)|k\rangle\otimes|r\rangle$.

     If we use some unitary $U$ to $|\varphi_1\rangle|\varphi_2\rangle$ and get $|\psi\rangle=U|\varphi_1\rangle|\varphi_2\rangle$. The amplitude of $|\psi\rangle$ has the form
      $$\sum_{k,r} u_{kr}^{y} s_{1k}(p)s_{2r}(p)$$
     The ratio of two arbitrary amplitude of $|\psi\rangle$ equals to
      $$\frac{
      \sum_{k,r} u_{kr}^{y_1}
            {s_{1k}(p)/s_{10}(p)\cdot s_{2r}(p)/s_{20}(p)}
      }
      {
      \sum_{k,r} u_{kr}^{y_2}
            {s_{1k}(p)/s_{10}(p)\cdot s_{2r}(p)/s_{20}(p)}
      }$$

      According to the induction hypothesis, $s_{1k}(p)/s_{10}(p), s_{2r}(p)/s_{20}(p)\in M$. And for $M$ is a field, so the ratio of two arbitrary amplitude of $|\psi\rangle$ belongs to $M$.

Then we consider the measurement operation.

When $n=0$, we have $|p\rangle$ and arbitrary constant coins, the ratios of their amplitudes belong to $M$.

Suppose the statement is true for $n\leq k$. Suppose we have already implemented several quantum states $|\varphi_i\rangle, i=1,...,s$. Without loss of generality we assume $s=2$.
     And when $n=k+1$, if we measure some of the qubits of $U|\varphi_1\rangle|\varphi_2\rangle$ and keep the remaining qubit, the amplitudes of the remaining qubit will be some amplitudes of $U|\varphi_1\rangle|\varphi_2\rangle$, multiplying by a common factor to normalize the vector. The ratio of two amplitudes is still in $M$. As a conclusion, when $n=k+1$, Statement 1 is true.

    Thus, we complete our proof.
\end{necessity}
\end{proof}

Now, let us go back to the quantum coin $|f(p)\rangle=\sqrt{f(p)}|0\rangle+\sqrt{1-f(p)}|1\rangle$. We use our main theorem to characterize the simulable real tuple.
\begin{corollary}
Tuple $(\sqrt{f(p)},\sqrt{1-f(p)})$, where $f(p)$ is a real function in $[0,1]$, is simulable if and only if $\sqrt{\frac{f(p)}{1-f(p)}}$ belongs to the field generated by $\sqrt{\frac{p}{1-p}}$ and real field $\mathbb{R}$. More precisely,
$$\sqrt{\frac{f(p)}{1-f(p)}}=\frac{g_1(p)}{g_2(p)}\sqrt{\frac{p}{1-p}}+\frac{g_3(p)}{g_4(p)},$$
where $g_i(p)$ are polynomials of $p$ with coefficients in $\mathbb{R}$.
\end{corollary}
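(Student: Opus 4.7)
The proposal is to obtain the corollary as a direct consequence of Theorem~\ref{thm:field}, with the only nontrivial step being the descent from complex coefficients to real coefficients in the representation.

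For sufficiency, suppose $\sqrt{f(p)/(1-f(p))}$ has the stated form with real polynomials $g_i(p)\in\mathbb{R}[p]$. Then it trivially lies in the field generated by $\sqrt{p/(1-p)}$ and $\mathbb{C}$, so by Theorem~\ref{thm:field} the tuple $(\sqrt{f(p)},\sqrt{1-f(p)})$ is simulable. Nothing more is needed here.

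For necessity, suppose $(\sqrt{f(p)},\sqrt{1-f(p)})$ is simulable. By Theorem~\ref{thm:field}, we can write
$$
\sqrt{\tfrac{f(p)}{1-f(p)}} \;=\; A(p)\,\sqrt{\tfrac{p}{1-p}} + B(p),
$$
where $A(p)$ and $B(p)$ are rational functions of $p$ with complex coefficients. The goal is to show that $A(p)$ and $B(p)$ can be taken real. I would decompose $A(p)=a_1(p)+ia_2(p)$ and $B(p)=b_1(p)+ib_2(p)$ with $a_j,b_j\in\mathbb{R}(p)$. Since the left-hand side is real and $\sqrt{p/(1-p)}\in\mathbb{R}$ for $p\in(0,1)$, setting the imaginary part to zero gives
$$
a_2(p)\,\sqrt{\tfrac{p}{1-p}} + b_2(p) \;=\; 0
$$
on an open subinterval of $(0,1)$.

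The key step is then to observe that $\sqrt{p/(1-p)}$ is not a rational function of $p$, so $\{1,\sqrt{p/(1-p)}\}$ is linearly independent over $\mathbb{R}(p)$. Consequently $a_2\equiv 0$ and $b_2\equiv 0$, so $A(p),B(p)\in\mathbb{R}(p)$, and clearing denominators produces the desired expression with $g_i\in\mathbb{R}[p]$. The main (but mild) obstacle is simply justifying this linear independence; it follows because $\sqrt{p/(1-p)}$ satisfies the irreducible quadratic $(1-p)X^2-p=0$ over $\mathbb{R}(p)$, so $\mathbb{R}(p)\bigl(\sqrt{p/(1-p)}\bigr)$ is a genuine degree-two extension and every element has a unique representation in the basis $\{1,\sqrt{p/(1-p)}\}$. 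Combining sufficiency and necessity yields the corollary.
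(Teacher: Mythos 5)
Your proof is correct and follows essentially the same route as the paper: apply Theorem~\ref{thm:field} to get a representation with complex rational coefficients, then observe that since the left-hand side and $\sqrt{p/(1-p)}$ are real, the coefficient rational functions must themselves be real. The paper simply asserts this realness (``notice that the ratio $g_1/g_2$, $g_3/g_4$ must be real''), whereas you supply the justification — linear independence of $\{1,\sqrt{p/(1-p)}\}$ over $\mathbb{R}(p)$, via irreducibility of $(1-p)X^2-p$ — which is the right way to make the paper's one-line remark rigorous.
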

\begin{proof}
The sufficiency part is the same as  the proof of the main theorem. For the necessity part, first we use our theorem to show the ratio of amplitudes has a similar form as described in the corollary. Then notice that the ratio $\frac{g_1(p)}{g_2(p)}$, $\frac{g_3(p)}{g_4(p)}$ must be real, so it is easy to transform $g_i(p)$ to be a real function.
\end{proof}

\textbf{Algorithm to implement $|f_{\frac{k_0(p)}{k_1(p)}}\rangle$}

In this part, we briefly describe the algorithm to implement $|f_{\frac{k_0(p)}{k_1(p)}}\rangle$ based on $|p\rangle$ and constant states $|a_c\rangle$, when $(k_0(p),k_1(p))$ is  simulable.

 According to Theorem~\ref{thm:field}, $\frac{k_0(p)}{k_1(p)}$ has the decomposition
$$\frac{k_0(p)}{k_1(p)}=\frac{g_1(p)}{g_2(p)}\sqrt{\frac{p}{1-p}}+\frac{g_3(p)}{g_4(p)}$$
for some $g_i(p)=\sum_{j=0}^{n_i}a^{(ij)}p^j$, $a^{(ij)}\in \mathbb{C}$.

As proved in the theorem, we can use the addition and multiplication operations to transform $|f_{h_1}(p)\rangle,|f_{h_2}(p)\rangle$ to $|f_{h_1+h_2}(p)\rangle$ and $|f_{h_1h_2}(p)\rangle$.  Similarly, we can construct $|f_p(p)\rangle$ through Construct$_p$
$$|p\rangle\xrightarrow{\text{Multiply}  |p\rangle} |f_{\frac{p}{1-p}}\rangle\xrightarrow{\text{Add $|1_c\rangle$,then Inverse}}|f_{1-p}\rangle$$
$$\xrightarrow{\text{Add} |-1_c\rangle, \text{then Multiply} |-1_c\rangle}|f_p\rangle$$
Using functions Addition, Multiplication and Construct$_p$ we can implement $|f_{g_i}\rangle$ and finally $|f_{\frac{k_0(p)}{k_1(p)}}\rangle$.

\textbf{Some interesting functions in our simulable set}

%

In this part, we show some interesting examples and nonexamples of simuable functions.

First, for some simple real function $f(p)$, $(\sqrt{f(p)},\sqrt{1-f(p)})$ is not simulable in the quantum-to-quantum case.
For example, in the following.
\begin{statement}\label{sta:p2}
$(p,\sqrt{1-p^2})$ is not simulable.
\end{statement}
\begin{proof} We prove it by contradiction. Suppose $(p,\sqrt{1-p^2})$ is simulable. According to the theorem, it means there are $g_i(p)$ that satisfy
$$\frac{p}{\sqrt{1-p^2}}=\frac{g_1(p)}{g_2(p)}\sqrt{\frac{p}{1-p}}+\frac{g_3(p)}{g_4(p)},$$
where $g_i(p)$ are polynomials of $p$ with coefficients in $\mathbb{C}$, $g_1(p)\neq0, g_3(p)\neq0$. It implies
$$\frac{p^2}{1-p^2}=\frac{g_1^2(p)}{g_2^2(p)}\frac{p}{1-p}+\frac{g_3^2(p)}{g_4^2(p)}+2\frac{g_1(p)g_3(p)}{g_2(p)g_4(p)}\sqrt{\frac{p}{1-p}}$$
which means $\sqrt{\frac{p}{1-p}}$ equals to a rational function and thus leads to a contradiction.
\end{proof}

Be careful, although we can not construct $p|0\rangle+\sqrt{1-p^2}|1\rangle$, we can construct $e^{i\theta(p)}p|0\rangle+\sqrt{1-p^2}|1\rangle$ for some $\theta(p)$. The two states have the same probability of getting $|0\rangle$.
\begin{statement}\label{sta:p2again}
$(e^{i\theta(p)}p,\sqrt{1-p^2})$  is simulable, where $e^{i\theta(p)}=\frac{\sqrt{1-p^2}}{p}(\frac{\sqrt{2}p}{1+p}\sqrt{\frac{p}{1-p}}+\frac{pi}{1+p})$.
\end{statement}
\begin{proof}
Let $k_0(p)=e^{i\theta(p)}p, k_1(p)=\sqrt{1-p^2}$. Then $(k_0(p), k_1(p))$ is simulable for it satisfies the theorem.
\end{proof}


\textbf{Relations among simulable sets}

In this part, we will discuss the relations between our simulable sets and the simulable functions in the classical-to-classical case (referred to as CC) and quantum to classical case (referred to as QC).

First, in order to compare our simulable sets with CC and QC, we need to transform the simulable tuple $(k_0(p),k_1(p))$ into the simulable function. However, we need to carefully choose the way of transformation due to the subtle examples in statements~\ref{sta:p2} and \ref{sta:p2again}. Since our goal is to compare the simulable sets with the set CC and QQ in which a function is simulable means that a special probability event can be constructed, and the natural way to obtain probability from a quantum state is to measure the state, thus, we define the set of  simulable functions in the quantum-to-quantum case (referred as QQ) based on the probability of getting output $|0\rangle$.


 \begin{definition}The set of  simulable functions in the quantum-to-quantum case is defined to be
$$
QQ \triangleq \{|k_0(p)|^2, \text{where } (k_0(p),k_1(p)) \text{is simulable}\}
$$
\end{definition}
 Next, we will discuss the relationship among the three sets. Before that, first we list the main results of ~\cite{classical,quantum}.

\begin{definition}\cite{classical} A function $f(p)$ is polynomially bounded if there exists an integer $n$ such that
$$
\min(f(p),1-f(p))\geq \min(p^n,(1-p)^n)
$$
for $p \in [0,1]$.
\end{definition}

\begin{lemma}
(classical-to-classical~\cite{classical}) Let $f: [0,1] \rightarrow [0,1]$. The function $f$ is simulable, in other words $f\in CC$, if and only if

(1) $f$ is continuous on $[0,1]$;

(2) either $f$ is constant on $[0,1]$,  or $f$ is polynomially bounded.
\end{lemma}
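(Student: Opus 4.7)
The plan is to split the biconditional into necessity and sufficiency, following the original strategy of Keane and O'Brien. For necessity, I would model any Bernoulli factory algorithm as a prefix-free set $W$ of finite binary strings at which the algorithm halts with output $1$, yielding the representation
\[ f(p) = \sum_{w \in W} p^{\#0(w)}(1-p)^{\#1(w)}. \]
Since the algorithm halts almost surely, this series converges for all $p \in [0,1]$, and a uniform-convergence argument on compact subsets of $(0,1)$ gives continuity there; continuity at the endpoints follows from the almost-sure halting hypothesis. For the polynomially-bounded condition, if $f$ is not constant one first rules out $f$ attaining the value $0$ or $1$ in the interior by an analyticity argument on the power series, and then quantifies the rate at which $f$ may approach $0$ or $1$ at the endpoints by analyzing the distinguishing power of any finite sample from a $p$-coin against a coin with $p$ near $0$ or $1$.

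For sufficiency, the plan is to employ the Nacu--Peres block scheme. Given continuous, polynomially bounded $f$, one constructs two polynomial envelope sequences $\underline{f}_n$ and $\overline{f}_n$ of degree $2^n$, expressed in the Bernstein basis with coefficients in $[0,1]$, such that $\underline{f}_n \uparrow f$ and $\overline{f}_n \downarrow f$, together with a compatibility relation across successive $n$ that permits a coupling. The simulator reads the $p$-coin in blocks of doubling size, draws a single independent uniform threshold $U$, and outputs $1$ as soon as $U$ falls below $\underline{f}_n$ evaluated at the empirical block statistic, respectively $0$ once $U$ exceeds $\overline{f}_n$. Almost-sure termination follows because $\overline{f}_n - \underline{f}_n$ shrinks uniformly to zero; correctness of the output probability is the telescoping sum over blocks.

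The main obstacle is the envelope construction in the sufficiency direction. Polynomial boundedness is precisely the hypothesis needed to keep the Bernstein coefficients of $\underline{f}_n$ and $\overline{f}_n$ inside $[0,1]$ while still sandwiching $f$, and verifying the consistency relation across $n$ (so the block scheme couples coherently and the empirical statistic can be re-used as $n$ grows) is the technical heart of the argument. The necessity direction is comparatively mechanical once the power-series representation and the finite-sample distinguishability estimates are in place, so I would expect to spend the bulk of the effort on exhibiting envelopes with the correct monotonicity, telescoping, and coefficient bounds.
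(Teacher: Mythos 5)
This lemma is not proved in the paper at all: it is quoted verbatim (with the paper's own notion of ``polynomially bounded'') as a background result due to Keane and O'Brien, and the paper's contribution begins only afterwards with the quantum-to-quantum characterization. So there is no in-paper proof to compare against. That said, your outline is a legitimate plan for establishing the classical theorem, and it is worth noting that you are actually blending two different sources: the necessity direction is essentially Keane--O'Brien (the prefix-free halting-set representation $f(p)=\sum_{w\in W}p^{\#0(w)}(1-p)^{\#1(w)}$, positivity/analyticity of the resulting series on $(0,1)$ to exclude interior zeros and ones, and a finite-sample indistinguishability bound to get the polynomial rate near the endpoints), whereas your sufficiency direction is the Nacu--Peres reverse-martingale/Bernstein-envelope scheme, which postdates Keane--O'Brien and is considerably cleaner than their original construction. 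Both routes are valid; Nacu--Peres buys you a more transparent coupling argument and better control over the halting time, at the price of the nontrivial verification that the Bernstein coefficients of $\underline{f}_n$ and $\overline{f}_n$ stay in $[0,1]$ and satisfy the cross-level consistency inequalities, which you correctly flag as the technical heart.

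Two small points to tighten if you were to carry this out. First, handle the constant case separately up front: a constant $f$ is simulated by von Neumann extraction of fair bits and so requires no envelopes, and your power-series positivity argument for ``no interior zeros'' only applies to the non-constant branch. Second, ``continuity at the endpoints follows from the almost-sure halting hypothesis'' is thinner than it looks: at $p=0$ only the all-tails trajectory has positive mass, and you need a dominated-convergence step to pass from almost-sure halting at $p=0$ to $\lim_{p\downarrow 0}f(p)=f(0)$; it is true, but it deserves a sentence rather than an ``it follows.'' Neither of these is a structural gap, just places where a full write-up would need to slow down.
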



The first observation is that many functions in QQ cannot be simulated in the classical-to-classical case.  For example, when
\begin{equation}
\frac{k_0(p)}{k_1(p)}=\prod_{i=1}^n (p-c_i), c_i \in (0,1).
\end{equation}
$(k_0(p),k_1(p))$ can be reached in the quantum-to-quantum case but $|k_0(p)|^2$  is not simulable in the classical case, for $|k_0(p)|^2$ reaches 0 in its domain~\cite{classical}.

On the other side, according to statement 3, we know $CC$ and $QQ$ have overlap $p^2$. Thus, we have
$$
CC\cap QQ\neq\emptyset,  QQ\not\subset CC.
$$

Next, we show that there is function $f(p)$ that belongs to $CC$ and does not belong to $QQ$. The function is shown in Fig.1.
\begin{equation*}
f(p)=\left\{
\begin{aligned}
\frac{1}{2},    &    p\in[0,\frac{1}{2}) \\
\frac{1}{2}p+\frac{1}{4},    &  p\in[\frac{1}{2},1]
\end{aligned}
\right.
\end{equation*}

\begin{figure}[H]
	\centering
		\centering\includegraphics[height=2in]{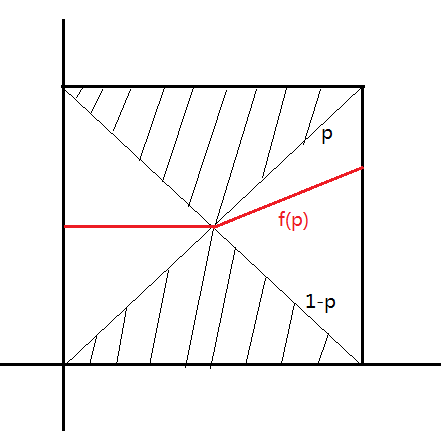}
		\caption{\label{fig:expo:random:a}CC$\not\subset QQ$}
\end{figure}

It's easy to show that $f(p)$  is polynomially bounded for $n=1$ and continuous in $[0,1]$, thus $f(p) \in CC$. We then show $f(p)\notin QQ$. If $f(p)\in QQ$, then $|k_0(p)|^2$ equals $\frac{1}{2}$ for $p\in[0,\frac{1}{2})$, then it must equal $\frac{1}{2} $ for $p\in[0,1]$, since the quantum-to-quantum simulable function or tuple has a strong relationship with polynomials and cannot have infinite zeros unless it is constant. So we conclude,
$$
CC\not\subset QQ.
$$

The following theorem proves $QC$ is strictly larger than $CC$,
$$
CC\subsetneqq QC.
$$
\begin{definition}\label{def:SPB}\cite{quantum} A function $f(p):[0,1] \rightarrow [0,1] $ is simple and poly-bounded(SPB) if and only if it satisfies

(1) $f$ is continuous.

(2) Both $Z=\{z_i: f(z_i)=0\}$ and $W=\{w_i:f(w_i)=1\}$ are finite sets.

(3) $\forall z\in Z$ there exist constants $c,\delta>0$ and integer $k <\infty$ such that
$$
c(p-z)^{2k}\leq f(p), \forall p\in [z-\delta,z+\delta].
$$

(4) $\forall w\in W$ there exist constants $c,\delta>0$ and integer $k <\infty$ such that
$$
1-c(p-w)^{2k}\geq f(p), \forall p\in [w-\delta,w+\delta].
$$
\end{definition}
\begin{lemma}(quantum-to-classical~\cite{quantum})
A function is simulable with quantum coins $|p\rangle=\sqrt{p}|0\rangle+\sqrt{1-p}|1\rangle$ and a finite set of single qubit unitary, in other words $f\in QC$, if and only if $f$ is SPB.
\end{lemma}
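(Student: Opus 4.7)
I would split the equivalence into a necessity direction and a sufficiency direction in the usual way.

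For necessity ($f \in QC \Rightarrow f$ is SPB), the plan is to track the structure of amplitudes throughout any finite protocol. Starting from copies of $|p\rangle$, every unitary produces amplitudes that are linear combinations of $\sqrt{p}$ and $\sqrt{1-p}$ with constant coefficients, and tensor products followed by further unitaries yield amplitudes that are polynomials in $\sqrt{p}$ and $\sqrt{1-p}$ with constant (complex) coefficients. A conditional measurement outcome then has success probability equal to a ratio of sums of squared moduli of such amplitudes, i.e.\ a rational function in $p$ and $\sqrt{p(1-p)}$. This gives continuity on $[0,1]$ wherever defined, which is condition (1). For conditions (3)--(4), the same polynomial structure shows that if the simulated probability $f(p)$ vanishes at some $z\in(0,1)$ then $z$ is a zero of finite order of a polynomial in $p$ and $\sqrt{p(1-p)}$, which translates into a bound of the form $c(p-z)^{2k}\le f(p)$ on a neighbourhood of $z$; the symmetric argument applied to $1-f$ gives (4), and finiteness of the sets $Z,W$ in condition (2) is then automatic unless $f$ is identically $0$ or $1$ on an interval (forcing it to be constant).

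For sufficiency ($f$ is SPB $\Rightarrow f \in QC$), the plan is constructive. I would first list a handful of QC-primitives: measuring $|p\rangle$ in the computational basis gives access to the coin with bias $p$; a Hadamard followed by measurement gives access to $\tfrac{1}{2}(1\pm 2\sqrt{p(1-p)})$; the rotation $U_a$ mentioned in the introduction gives access to $(\sqrt{p(1-a)}-\sqrt{a(1-p)})^2$ for any constant $a\in(0,1)$, which is precisely what lets one place a zero of even order $2$ at any prescribed $a\in(0,1)$. Using the classical Bernoulli factory to compose these primitives, I would then build a general SPB $f$ by (i) using the $U_a$-gadget iteratively to manufacture zeros at each $z_i\in Z$ and (by swapping $f\leftrightarrow 1-f$) at each $w_i\in W$ with the prescribed even order $2k$, and (ii) using classical composition to adjust the remaining shape of $f$ so that it is polynomially bounded away from its zeros and unit values. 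The classical Bernoulli factory of Keane--O'Brien then finishes the simulation once the required quantum pre-processing has enriched the set of available input coins.

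The main obstacle I anticipate is the sufficiency direction, specifically matching the prescribed local vanishing orders at $Z$ and $W$ without inadvertently introducing new zeros or new unit values of the wrong parity. A single application of $U_a$ creates a zero of order $2$; producing an arbitrary even order $2k$ requires an iterative or product construction, and controlling the global behaviour so that the resulting coin stays strictly inside $(0,1)$ outside its designated $Z\cup W$ is delicate. I would likely handle this by localizing: combining a classically-simulable "envelope" that is bounded away from $0$ and $1$ everywhere except near the prescribed points with a product of $U_a$-gadgets, one per point, so that each local factor only affects its own neighbourhood to leading order. The global polynomial-boundedness of the classical envelope then combines with the explicit even-order zeros from the quantum factors to match the SPB specification exactly.
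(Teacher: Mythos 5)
This lemma is not proved in the paper at all: it is imported verbatim from Dale, Jennings and Rudolph \cite{quantum} and used as a black box to locate $QQ$ inside $QC$. So there is no ``paper's own proof'' to compare against, and the right question is whether your reconstruction would actually establish the cited result.

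On that score your sufficiency sketch is in the right spirit (use a $U_a$-gadget to manufacture a coin vanishing at a prescribed $a\in(0,1)$, then hand the resulting enriched coin family to the classical Keane--O'Brien machinery), and you correctly flag the genuinely delicate part, namely matching local vanishing orders without creating spurious zeros or ones. The necessity direction, however, contains a real gap. You argue that the success probability of a QC protocol is ``a ratio of sums of squared moduli'' of polynomial amplitudes, ``i.e.\ a rational function in $p$ and $\sqrt{p(1-p)}$.'' That is only true for protocols with a fixed, bounded number of coin uses. A Bernoulli factory protocol, classical or quantum, is in general an adaptive procedure with an almost-surely finite but unbounded stopping time, and the output probability is an infinite series $\sum_n a_n\, q_n(p)$ of such terms, not a rational function. (Indeed, no finite protocol can realize even $f(p)=1/(1+p)$, which is SPB.) The Keane--O'Brien and Dale et al.\ necessity arguments live precisely in this regime: one has to extract continuity and the polynomial lower bounds near $Z$ and $W$ from the partial sums of the series and a tail-probability estimate, not from a closed rational form. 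Your argument, as written, would only prove necessity for the much smaller class of $f$'s realizable by bounded-depth circuits, so it does not establish the lemma.

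Two smaller points. First, continuity alone does not follow from ``rational in $p$ and $\sqrt{p(1-p)}$'' — even in the bounded case one must rule out poles in $[0,1]$, which you do not address. Second, the lemma as stated asserts a \emph{finite} single-qubit gate set; in your construction the set of gadgets $\{U_{z_i},U_{w_j}\}$ is finite only because an SPB $f$ has finite $Z\cup W$, and it would be worth saying so explicitly, since otherwise the claim looks like it needs a continuum of unitaries.
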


The quantum-to-classical case allows $f$ to reach 0 or 1 for finite times, and we conclude $CC\subsetneqq QC$.

Finally, we prove $QQ\subset QC$ by showing all functions in $QQ$ satisfy the $SPB$ condition. We put the proof in the appendix.

So we can conclude the three sets have the relationship shown in FIG.2.
\begin{figure}[H]
	\centering
		\centering\includegraphics[height=1in]{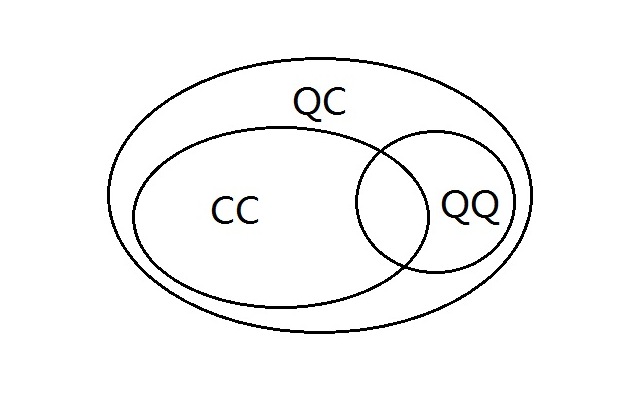}
		\caption{\label{fig:expo:random:a}Relation among three sets}
\end{figure}

\section{\label{sec:level1}Summary}
In our work, we give the complete answer to the quantum-input quantum-output Bernoulli factory problem and determine the exact set of simulable functions. The result provides a thorough understanding of the Bernoulli factory problem in the quantum world. The key point is the set of ratio $\frac{k_0(p)}{k_1(p)}$, where $(k_0(p),k_1(p))$ is simulable, and has a beautiful field structure. According to our main theorem, $p|0\rangle+\sqrt{1-p^2}|1\rangle$ can not be constructed while we can construct $e^{i\theta(p)}p|0\rangle+\sqrt{1-p^2}|1\rangle$ for some $\theta(p)$.  We also show the relationship between our results(QQ) and the quantum-to-classical case(QC) and the classical-to-classical case(CC). Note that CC and QQ don't have any inclusion relations with each other. Furthermore, for any simulable tuple $(k_0(p),k_1(p))$, we also give an algorithm to implement it.


\begin{appendix}
\section{\label{sec:level1}Proof of $QQ\subset QC$}

\begin{lemma} Let $T(x_1,x_2,x_3): \mathbb{R}^3\rightarrow\mathbb{R}$ be a multivariate polynomial of $x_1, x_2, x_3$. Suppose $T(p,\sqrt{p},\sqrt{1-p})$ is not a zero function. If $T(z,\sqrt{z},\sqrt{1-z})=0$ for some $z\in [0,1]$. Then there exists a real number $\delta$, a positive integer $k$, a function $m(p)$ which is continuous in $[z-\delta,z+\delta]$, such that $T(p,\sqrt{p},\sqrt{1-p})=(p-z)^{\frac{1}{2}k}m(p)$ and $m(z)\neq0$.
\end{lemma}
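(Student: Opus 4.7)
The plan is to analyze $F(p):=T(p,\sqrt{p},\sqrt{1-p})$ separately at an interior zero $z\in(0,1)$ and at endpoint zeros $z\in\{0,1\}$, using real-analyticity in both cases after an appropriate change of variable. It is convenient first to normalise $T$: using the identities $\sqrt{p}^{\,2}=p$ and $\sqrt{1-p}^{\,2}=1-p$, every monomial of $T(p,\sqrt{p},\sqrt{1-p})$ can be rewritten so that $\sqrt{p}$ and $\sqrt{1-p}$ each appear with exponent $0$ or $1$, giving the canonical form
$$F(p)=A(p)+B(p)\sqrt{p}+C(p)\sqrt{1-p}+D(p)\sqrt{p(1-p)},$$
with $A,B,C,D\in\mathbb{R}[p]$, at least one of which is a nonzero polynomial by the hypothesis that $F$ is not identically zero (using that $1,\sqrt{p},\sqrt{1-p},\sqrt{p(1-p)}$ are linearly independent over $\mathbb{R}(p)$).

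For an interior zero $z\in(0,1)$, both $\sqrt{p}$ and $\sqrt{1-p}$ are real-analytic on a neighbourhood of $z$, so $F$ is real-analytic there. Since $F$ is not identically zero, the identity theorem forbids $F$ from vanishing to infinite order at $z$, so there exists $n\ge 1$ with $F(p)=(p-z)^n\,m(p)$ where $m$ is analytic near $z$ and $m(z)\ne 0$. Taking $k=2n$ gives the claim.

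For $z=0$, I would substitute $u=\sqrt{p}$ and set $\tilde F(u):=T(u^2,u,\sqrt{1-u^2})$. This is real-analytic near $u=0$ because $\sqrt{1-u^2}$ is. Since $F$ is not identically zero on $[0,1]$ and the map $u\mapsto u^2$ surjects $[0,1]$ onto $[0,1]$, $\tilde F$ is not identically zero on $[0,1]$ either; hence it has a zero of some finite order $n$ at $u=0$, i.e.\ $\tilde F(u)=u^n\,\tilde m(u)$ with $\tilde m$ analytic and $\tilde m(0)\ne 0$. Undoing the substitution yields $F(p)=p^{n/2}\,\tilde m(\sqrt{p})$, so setting $m(p):=\tilde m(\sqrt{p})$ gives a function continuous on some $[0,\delta]$ with $m(0)\ne 0$, and $k=n$ works. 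The case $z=1$ is completely symmetric via the substitution $v=\sqrt{1-p}$; one absorbs a sign $(-1)^{k/2}$ into $m$ if necessary to express the result as $(p-z)^{k/2}m(p)$.

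The main subtlety is just that at the endpoints $F$ itself is not analytic in $p$, so the direct interior argument breaks down; passing to the $u$- or $v$-variable restores analyticity, and the $p^{k/2}$ factor in the conclusion is exactly the cost of undoing that substitution. A minor technical point is to interpret the neighbourhood $[z-\delta,z+\delta]$ as its intersection with $[0,1]$ when $z\in\{0,1\}$, since $m(p)=\tilde m(\sqrt{p})$ is only defined for $p\ge 0$; this is harmless because the lemma is used only to extract a local polynomial lower bound on $|F|^2$ of the form appearing in the SPB condition.
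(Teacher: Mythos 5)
Your proof is correct, and it takes a genuinely different route from the paper's. The paper argues purely algebraically: it writes $F(p)=t_1(p)+t_2(p)\sqrt{p}+t_3(p)\sqrt{1-p}+t_4(p)\sqrt{p(1-p)}$ and, by repeatedly multiplying numerator and denominator by suitable conjugate expressions (e.g.\ replacing $F$ by a ratio whose numerator is $\bigl(t_3\sqrt{1-p}+t_4\sqrt{p(1-p)}\bigr)^2-\bigl(t_1+t_2\sqrt{p}\bigr)^2$), it successively eliminates the radicals until the problem reduces to extracting the order of vanishing of an ordinary univariate polynomial $t''(p)$ at $z$; the denominators introduced along the way are shown to be nonzero at $z$ and are absorbed into $m(p)$. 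The endpoint cases $z=0,1$ are handled by first pulling out a factor $\sqrt{p}$ (respectively $\sqrt{1-p}$) and then running the same rationalisation. Your argument instead invokes real-analyticity and the identity theorem: at an interior $z$ the function $F$ is real-analytic directly, while at $z=0$ (resp.\ $z=1$) the change of variable $u=\sqrt{p}$ (resp.\ $v=\sqrt{1-p}$) makes $\tilde F$ real-analytic, and finite order of vanishing then falls out immediately since $F\not\equiv 0$. Both approaches are sound; yours is considerably shorter and more conceptual, at the cost of importing the identity theorem, whereas the paper's stays elementary and purely algebraic at the cost of a more intricate case analysis. Two minor remarks on your write-up: the parenthetical appeal to linear independence of $\{1,\sqrt{p},\sqrt{1-p},\sqrt{p(1-p)}\}$ over $\mathbb{R}(p)$ is not actually needed for the direction you use (if all of $A,B,C,D$ were zero then $F$ would trivially be zero); and your observation that the one-sided neighbourhood at the endpoints is the correct reading of the lemma is the same convention the paper implicitly adopts in its cases (b) and (c).
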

\begin{proof}
 For $T(x_1,x_2,x_3)$ are polynomials, there exist polynomials $t_k(p), k=1,2,3,4$, such that
\begin{displaymath}
\begin{array}{ll}
 & T(p,\sqrt{p},\sqrt{1-p})\\
=& t_1(p)+t_2(p)\sqrt{p}+t_3(p)\sqrt{1-p}+t_4(p)\sqrt{p(1-p)}
\end{array}
\end{displaymath}

\begin{itemize}
\item[(a)] When $z\in(0,1)$.
If $(p-z)| t_k(p), k=1,2,3,4$, then we can extract $(p-z)$. If not, we assume $t_1(z)\neq 0$, other situation is similar. Furthermore, we can assume $t_1(z)+t_2(z)\sqrt{z}\neq 0$, for if every two items sum to 0, combined with $T(z,\sqrt{z},\sqrt{(1-z)}=0$, we can conclude $t_1(z)=0$. Then $-t_1(z)-t_2(z)\sqrt{z}+t_3(z)\sqrt{1-z}+t_4(z)\sqrt{z(1-z)}\neq 0$ and we can rewrite T as
$$
\frac{
(t_3(p)\sqrt{1-p}+t_4(p)\sqrt{p(1-p)})^2-(t_1(p)+t_2(p)\sqrt{p})^2
}
{
-t_1(p)-t_2(p)\sqrt{p}+t_3(p)\sqrt{1-p}+t_4(p)\sqrt{p(1-p)}
}
$$
$z$ is zero of the numerator, but not a zero of the denominator. Thus we can let the denominator be a part of $m(p)$. Now, since we can eliminate $\sqrt{1-p}$ in the numerator, we can rearrange it as $t'_1(p)+t'_2(p)\sqrt{p}$. Next, we use the similar technique. If $z$ is zero of both $t'_1, t'_2$, extract $(p-z)$; otherwise, $t'_1(z)-t'_2(z)\sqrt{z}\neq 0$, and we have the numerator $=\frac{t''(p)}{t'_1(p)-t'_2(p)\sqrt{p}}$ for some polynomial $t''(p)$. Finally, $z$ is zero of $t''(p)$ so that there exists integer $\ell$ and polynomial $s(p)$ such that $t''(p) = (p-z)^{\ell}s(p)$ and $s(z)\neq 0$.  Combine all of the argument, we prove the lemma in this case.
\item[(b)] When $z=0$, then $t_1(0)+t_3(0)=0$.
If $t_1(0)=t_3(0)=0$, thus $p|t_1(p), p|t_3(p)$. So we can extract an common divisor $\sqrt{p}$ and consider $\frac{T(p,\sqrt{p},\sqrt{1-p})}{\sqrt{p}}$ instead. If $t_1(0)\neq0$, then $t_1(0)+t_2(0)\sqrt{0}\neq0$. We can use techniques in situation(a) to simplify the numerator. The following steps are similar.
\item[(c)] When $z=1$, techniques are similar to case (b).

\end{itemize}
Combine (a)(b)(c), we complete our proof.
\end{proof}

\begin{statement}
If $f(p)\in QQ$, then $f(p)$ satisfies the SPB condition (see Definition~\ref{def:SPB}). Thus $QQ\subset QC$.
\end{statement}
\begin{proof}
If $f(p)\in QQ$, then there exists a simulable tuple $(k_0(p),k_1(p))$ such that $f(p)=|k_0(p)|^2$. According to the theorem, there exist the complex multivariate polynomials $T_1(x_1,x_2,x_3), T_2(x_1,x_2,x_3)$ such that
$$
\frac{k_0(p)}{k_1(p)}=\frac{T_1(p,\sqrt{p},\sqrt{1-p})}{T_2(p,\sqrt{p},\sqrt{1-p})}
$$
For convenience, we use $T_1=T_1(p,\sqrt{p},\sqrt{1-p})$, $T_2=T_2(p,\sqrt{p},\sqrt{1-p})$.
Then
$$
|k_0(p)|^2=\frac{|T_1|^2}{|T_1|^2+|T_2|^2}.
$$
In the following, we check three conditions in the SPB definition.
\begin{itemize}
\item[(1)] Both $Z=\{z_i: f(z_i)=0\}$ and $W=\{w_i:f(w_i)=1\}$ are finite sets.

For $T_i(x_1,x_2,x_3)$ are multivariate polynomials so $T_i$ is bounded when $p \in [0,1]$. Zeros of $f(p)$ should be zeros of $T_1$. It is easy to show that solving $T_1(p,\sqrt{p},\sqrt{1-p})=0$ can be transformed to find zeros of an univariate polynomial. So $f(p)$ has only finite zeros. Similarly, consider zeros of $f_1(p)=|k_1(p)|^2$, we know $f(p)$ has finite ones.

\item[(2)] $f(p)$ is continuous in $[0,1]$.

  For $T_i$ has finite zeros and continuous in $[0,1]$, to prove the continuity of $f$, we need to consider the value of $f(p)$ when $|\frac{T_2}{T_1}|^2\rightarrow\infty$. At this time, $f(p)$ is bounded and goes to 0, so $f(p)$ is continuous in [0,1].

\item[(3)]$\forall z\in Z$ there exists constants $c,\delta>0$ and integer $k <\infty$ such that
$$
c(p-z)^{2k}\leq f(p), \forall p\in [z-\delta,z+\delta].
$$

Use lemma 3, we can easily get this conclusion. Similarly, consider zeros of $|k_1(p)|^2$, we can get the part for ones of $f(p)$.
\end{itemize}
\end{proof}
\end{appendix}



\end{document}